\newtheorem*{theorem*}{Theorem}
\def\be{\begin{eqnarray}}
\def\ee{\end{eqnarray}}
\begin{document}


\title{ The Buchdahl Stability Bound in Eddington-inspired Born-Infeld Gravity}

\author{Wei-Xiang Feng}
 \email{wfeng016@ucr.edu}
 \email{wxfeng@gapp.nthu.edu.tw}
 \affiliation{Department of Physics and Astronomy,
 University of California, Riverside, California 92521, USA}
\author{Chao-Qiang Geng}
 \email{geng@phys.nthu.edu.tw}
\affiliation{Department of Physics, National Tsing Hua University, Hsinchu 300, Taiwan}
\affiliation{Physics Division, National Center for Theoretical Sciences, Hsinchu 300, Taiwan}
\affiliation{Synergetic Innovation Center for Quantum Effects and Applications (SICQEA), 
Hunan Normal University, Changsha 410081, China}
\author{Ling-Wei Luo}
\email{lwluo@gate.sinica.edu.tw}
\affiliation{Department of Physics,
National Tsing Hua University, Hsinchu 300, Taiwan}
\affiliation{Institute of Physics, Academia Sinica, Taipei 11529, Taiwan}


\begin{abstract}
We give the Buchdahl stability bound in Eddington-inspired Born-Infeld (EiBI) gravity. 
We show that this bound depends on an energy condition controlled by the model parameter $\kappa$.
From this bound, we can constrain $\kappa\lesssim 10^{8}\text{m}^2$ if a neutron star with a mass around $3M_{\odot}$ is observed in the future.
In addition,  to avoid the potential pathologies in EiBI, a \emph{Hagedorn-like} equation of state associated with $\kappa$ at the center of a compact star is inevitable, which is similar to the Hagedorn temperature in string theory.
\end{abstract}

\maketitle


\section{\label{sec:intro}introduction}
Despite general relativity (GR) being the most successful theory to describe gravity, 
there have been a number of  modified theories of GR to resolve various problems, such as 
 singularities, local energy problems and  incompatibilities with our quantum world. 
To regularize the singularities Deser and Gibbons (1998)~\cite{Deser:1998rj}, motivated by Born-Infeld electrodynamics~\cite{Born:1934gh}, discussed the possible forms of the gravitational analogue of Born-Infeld theory in the metric formalism, and the necessary conditions without running into the ghost problem. On the other hand, the ghost problem can also be avoided in the Palatini formalism.
In particular,
Vollick (2004)~\cite{Vollick:2003qp} first considered a Born-Infeld-like action for gravity 
in the Palatini formalism. Later on, Ba{\~n}ados and Ferreira (2010)~\cite{Banados:2010ix} showed that this action can be regarded as the 
Eddington action if the matter action is absent~\cite{Eddington:1924}. 
Furthermore, when  matter \emph{solely} couples  to the metric, the highly nonlinear coupling between matter and gravity could avoid  the cosmological singularity in the radiation dominated era, 
resulting in significant deviations from GR in the extremely high density environment. 
Since then, this model has aroused much interest,  dubbed Eddington-inspired Born-Infeld (EiBI) gravity~\cite{BeltranJimenez:2017doy}.

It is known that
when we apply a modified theory of gravity to a stellar sphere, the junction conditions must be dealt with carefully. For example, in $f(R)$ gravity the matching of the first and second fundamental forms~\cite{Israel:1966rt} is not sufficient to merge the two spacetime regions
since the derivative terms of curvatures in the field equations play important roles~\cite{Deruelle:2007pt,Senovilla:2013vra}.
In contrast to the higher derivative theory, EiBI is equivalent to GR in vacuum, so that  the junction conditions are the same as
those in GR. This advantage makes EiBI an interesting modification of GR.
Nevertheless, when applying EiBI to compact stars, it has been found that 
there are some pathologies, such as the surface singularities~\cite{Pani:2012qd} , and  anomalies associated with phase transitions~\cite{Sham:2013sya}.
However, the surface singularities can be prevented if the equations of state (EoS) is modified by the geodesic deviation (Jacobi) equation near the surface of a star~\cite{Kim:2013nna}. This aspect is quite similar to EoS
 in $f(R)$ theories, in which  more stringent junction conditions are needed~\cite{Ganguly:2013taa,Feng:2017hje}.

In GR, the Buchdahl stability bound implies that the radius of
 a stable stellar sphere cannot be smaller than $9/8$ of its gravitational radius~\cite{Buchdahl:1959zz}. 
 This inequality is independent of EoS contained in the sphere, which is  a generic feature of GR.
For EiBI gravity, it is not clear if there is a similar Buchdahl bound.
However, it is expected that the bound in GR should be different from that  in EiBI gravity.
\section{\label{sec:EiBI}EiBI gravity}
To address this issue, we start from the EiBI action
~\cite{Banados:2010ix,Vollick:2003qp} with the geometric units $G=c=1$, given by
\be
\label{action}
S_{\text{EiBI}}[g, \Gamma]
&=&
\frac{2}{8\pi\kappa}\int d^{4}x\bigg[\sqrt{-\text{det}(\boldsymbol{g}+\kappa\boldsymbol{\mathcal{R}}
(\Gamma))}-\lambda\sqrt{-\text{det}(\boldsymbol{g})}\bigg]+S_{M}[g,\Psi],
\ee
where  $\boldsymbol{g}$ and $\boldsymbol{\mathcal{R}}$ correspond to the metric tensor and Ricci curvature based on the connection $\Gamma$
 with the matrix elements $g_{\mu\nu}$  and $\mathcal{R}_{\mu\nu}$, respectively, and
$S_{M}[g,\Psi] $ is the matter action, in which 
the generic matter field $\Psi$ couples only to the metric tensor $g$. 
Note that the cosmological constant is defined as $\Lambda=(\lambda-1)/\kappa$ in this model.
 However, when it comes to compact stars, it is reasonable to set $\lambda=1$ such that $\Lambda=0$. In the Palatini formalism, varying the action with respect to $g$ and $\Gamma$ independently gives the equations of motion of EiBI gravity as follows:
\begin{align}
q_{\mu\nu}&=g_{\mu\nu}+\kappa\mathcal{R}_{\mu\nu},\label{fieldeq1}\\
q^{\mu\nu}&=\tau(g^{\mu\nu}-8\pi\kappa T^{\mu\nu}),\label{fieldeq2}
\end{align}
where $\tau=\sqrt{\abs{\boldsymbol{g}}/\abs{\boldsymbol{q}}}$ with $\abs{\bullet}$ denoting the absolute value of the determinant of the matrix, 
$T^{\mu\nu}$ is the physical energy momentum tensor, and $q^{\mu\nu}\equiv q_{\mu\nu}^{-1}$.
Here, the auxiliary metric is used for raising/lowering the index in the geometric sector, while the metric tensor
is for the matter sector.
As a result, it follows that 
\begin{equation}\label{fieldeq3}
\begin{aligned}
\delta^{\mu}{}_{\nu}-\kappa\mathcal{R}^{\mu}{}_{\nu}=q^{\mu\lambda}g_{\lambda\nu}=\tau(\delta^{\mu}{}_{\nu}-8\pi\kappa T^{\mu}{}_{\nu}).
\end{aligned}
\end{equation}
By defining $\mathcal{R}\equiv \mathcal{R}^{\mu}{}_{\mu}$ and $T\equiv T^{\mu}{}_{\mu}$,
the field equations can be recast into the GR-like ones with the Einstein tensor built with the auxiliary metric~\cite{Delsate:2012ky},
given by
\begin{equation}\label{fieldeq4}
\begin{aligned}
\mathcal{G}^{\mu}{}_{\nu}[q]&=\mathcal{R}^{\mu}{}_{\nu}-\frac{1}{2}\delta^{\mu}{}_{\nu}\mathcal{R}\\
&=8\pi\big(\tau T^{\mu}{}_{\nu}+\mathcal{P}\delta^{\mu}{}_{\nu}\big)\equiv 8\pi\mathcal{T}^{\mu}{}_{\nu},
\end{aligned}
\end{equation}
where $\mathcal{P}=(\tau-1)/(8\pi\kappa)-\tau T/2$ is the isotropic pressure  and $\mathcal{T}^{\mu}{}_{\nu}$ is defined as the ``auxiliary'' energy momentum tensor.
Furthermore, by taking the determinant of Eq.(\ref{fieldeq3}), we can express  $\tau$ \emph{solely} in terms of $T^{\mu}{}_{\nu}$ as 
\begin{equation}\label{taueq1}
\tau=\left[\text{det}(\delta^{\mu}{}_{\nu}-8\pi\kappa T^{\mu}{}_{\nu})\right]^{-{1\over 2}}.
\end{equation}

If we model the self-gravitating sphere by a perfect fluid with $T^{\mu}{}_{\nu}=(\rho+p)u^{\mu}u_{\nu}+p\delta^{\mu}{}_{\nu}$, where $\rho$, $p$ and $u^{\mu}$ denote the energy density, pressure and four-velocity of the fluid  with $u^{\mu}u_{\mu}=-1$, respectively, we can further 
write $\tau$ in terms of $\rho$ and $p$  as
\begin{equation}\label{taueq2}
\tau=\left[(1+8\pi\kappa\rho)(1-8\pi\kappa p)^{3}\right]^{-{1\over 2}}\equiv {1\over ab^3}\,,
\end{equation}
where $a\equiv\sqrt{1+8\pi\kappa\rho}$ and $b\equiv\sqrt{1-8\pi\kappa p}$ are required to be positive real number.
In addition, we can define the ``auxiliary'' density $\tilde{\rho}$ and pressure $\tilde{p}$ in terms of $a$ and $b$ by:
\begin{align}
\mathcal{T}^{t}{}_{t}&={-a^{2}+3b^{2}-2ab^{3}\over 16\pi\kappa ab^{3}}\equiv-\tilde{\rho},\label{appdensity1}\\
\mathcal{T}^{i}{}_{j}&={a^{2}+b^{2}-2ab^{3}\over 16\pi\kappa ab^{3}}\delta^{i}{}_{j}\equiv\tilde{p}\delta^{i}{}_{j}.\label{apppressure1}
\end{align}

\subsection{\label{sec:TOV}Tolman-Oppenheimer-Volkoff equation}
In a spherically symmetric and static spacetime, the physical  and  auxiliary metrics are given by
\begin{equation*}
\begin{aligned}
g_{\mu\nu}dx^{\mu}dx^{\nu}&=-F^{2}(r)dt^{2}+G^{2}(r)dr^{2}+H^{2}(r)d\Omega^{2},\\
q_{\mu\nu}dx^{\mu}dx^{\nu}&=-A^{2}(r)dt^{2}+B^{2}(r)dr^{2}+r^{2}d\Omega^{2},
\end{aligned}
\end{equation*}
respectively, where $d\Omega^2=d\theta^2+\sin^2\theta~d\phi^2$.
For a perfect fluid, $T^{\mu\nu}=(\rho+p)u^{\mu}u^{\nu}+pg^{\mu\nu}$, the relations between the two sets of metrics $g_{\mu\nu}$ and $q_{\mu\nu}$ via Eq.(\ref{fieldeq2}) are given by
\begin{equation}
\begin{aligned}
F^{2}=A^{2}ab^{-3},\quad
G^{2}=B^{2}/ab,\quad
H^2=r^2/ab\,.
\end{aligned}
\end{equation}
Once  the auxiliary metric is solved, the physical one can be obtained immediately, and vice versa. 
We  note that these two metrics are identical to those  in the absence of matter, $i.e.$ vacuum, in which  $a=b=1$.

With the auxiliary metric at hand, we can also solve the auxiliary Einstein equations in Eq.(\ref{fieldeq4}) as in GR. 
First, we give an \emph{ans{a}tz} $B^2\equiv\big[1-2m(r)/r\big]^{-1}$, where $m(r)$ represents the auxiliary mass within the radial distance $r$. 
The $tt$ and  $rr-$components lead to
\begin{equation}\label{app.tt}
m'(r)=4\pi r^2\tilde{\rho}
\end{equation}
and
\begin{equation}\label{app.rr}
\frac{A'}{A}=\frac{m+4\pi r^{3}\tilde{p}}{r(r-2m)},
\end{equation}
respectively, where the $``'"$ denotes the derivative with respect to the coordinate $r$.
Together with $\tilde{\nabla}_{\alpha}\mathcal{T}^{\alpha}{}_{\beta}=0$ ($\tilde{\nabla}$ denotes the covariant derivative associated with $q_{\alpha\beta}$), we find
\begin{equation}\label{app.tov}
-\frac{1}{\tilde{\rho}+\tilde{p}}\frac{d\tilde{p}}{dr}=\frac{m+4\pi r^{3}\tilde{p}}{r(r-2m)},
\end{equation}
which is the Tolman-Oppenheimer-Volkoff (TOV) equation~\cite{Oppenheimer:1939ne} for auxiliary quantities, as expected. With the help of Eq.(\ref{apppressure1}), the derivative of $\tilde{p}(\rho,p)$ with respect to $r$ can be written in terms of $dp/dr$, 
resulting in the modified TOV equation
in EiBI gravity, given by
\begin{equation}\label{mtov}
-8\pi\kappa\bigg[\frac{(a^2-b^2)(b^2/c^{2}_{s}+3a^2)+4a^{2}b^{2}}{4a^{2}b^{2}(a^2-b^2)}\bigg]\frac{dp}{dr}=\frac{m+4\pi r^{3}\tilde{p}}{r(r-2m)},
\end{equation}
where $c_{s}^2\equiv dp/d\rho$ with $c_{s}$ the speed of sound.
From this equation, both $\kappa>0$ and $\kappa<0$ are allowed to support an equilibrium stellar structure~\cite{Pani:2011mg,Pani:2012qb,Sham:2012qi}.
 However, for $\kappa<0$, when some generic first-order phase transitions are present inside a star~\cite{Sham:2013sya},  the energy density $\rho$ can not be monotonically non-increasing throughout the star interior.
As a result, we will only focus on the case of $\kappa>0$ in this study. 

By taking the derivative of $\tilde{p}$ in Eq.(\ref{app.rr}) and substituting the expression into Eq.(\ref{app.tov}), we find
\begin{equation}\label{Buch1}
\frac{d}{dr}\bigg[\frac{A'}{r}\sqrt{1-\frac{2m}{r}}\bigg]=\frac{A}{\sqrt{1-\frac{2m}{r}}}\bigg(\frac{m}{r^3}\bigg)',
\end{equation}
which characterizes how the average auxiliary density changes in accordance with the auxiliary potentials, and  plays the key role in proving the Buchdahl stability bound. Na{i}vely, we would expect the same Buchdahl's stability bound as in GR due to the GR-like field equations of the auxiliary quantities.
However, in order to examine the  bound, we need to express these auxiliary quantities in terms of the physical ones.
Due to the highly nonlinear coupling of matter to gravity in EiBI, EoS
clearly affects the determination of the bound.

\subsection{\label{sec:eff}Effective density associated with the physical mass}
For a star submerged in the Schwarzshild vacuum, we use the Darmois-Israel matching conditions~\cite{Israel:1966rt} since EiBI is equivalent to
 GR in vacuum due to $B^{-2}(r)=A^{2}(r)=G^{-2}(r)=F^{2}(r)=1-2\mathcal{M}/r$, where $\mathcal{M}$ is the Schwarzschild mass in the spherically static spacetime. As the auxiliary mass $m(r)$ is quite different from the physical one with the mass function appearing in the physical metric inside the interior of a star, it should match $m(r_{s})=\mathcal{M}$ at radius $r_{s}$ with the condition $p(r_{s})=0$. 
 This suggests that we should take  another \emph{ans{a}tz} $G^2\equiv\big[1-2M(r)/r\big]^{-1}$ together with $G^2=B^2/ab$, where $M(r)$ stands for the physical mass, to obtain
\begin{equation}\label{mass relation}
M=m+\frac{1}{2}(1-ab)(r-2m),
\end{equation}
or equivalently
\begin{equation*}
m=M+\frac{1}{2ab}(ab-1)(r-2M).
\end{equation*}
The ``effective'' density associated with  $M(r)$ can be defined by $\rho_{\text{eff}}\equiv M'(r)/4\pi r^2$ in analogy to Eq.(\ref{app.tt}).
Subsequently, we derive 
\begin{equation}\label{effdensity1}
\rho_{\text{eff}}=ab\tilde{\rho}+\frac{(1-ab)}{8\pi r^2}+\frac{\kappa}{2r}\bigg(1-\frac{2m}{r}\bigg)\bigg(\frac{a^2c_{s}^2-b^2}{ab}\bigg)\frac{d\rho}{dr}.
\end{equation}
Instead of the auxiliary density $\tilde{\rho}$, $\rho_{\text{eff}}$ is the one in EiBI,  corresponding to the physical density $\rho$ in GR. 
Therefore, the non-increasing monotonicity of $\rho_{\text{eff}}$ must be assumed to show the Buchdahl stability bound. 
On the other hand, it is worth noting that the second term of $(1-ab)/8\pi r^2$ in Eq.(\ref{effdensity1}) is potentially divergent as $r\rightarrow 0$ unless $ab=1$. 
To avoid the singularity of $\rho_{\text{eff}}$ in this model, we must have $ab=1$ at the center of a star as $r\rightarrow 0$. This situation is analogous to the surface singularities of the Ricci curvature associated with the metric tensor $g_{\mu\nu}$~\cite{Pani:2012qd,Kim:2013nna}.

\subsection{\label{sec:mono}Monotonicity of densities}
A crucial assumption in proving the Buchdahl stability bound in GR is the monotonically non-increasing property of the physical density $\rho$. However, 
  is the monotonically non-increasing $\rho$ sufficient to have the same monotonic behavior of $\tilde{\rho}(\rho,p)$ and $\rho_{\text{eff}}(\rho,p)$?
To answer this question,
let us examine the differential relation between $\tilde{\rho}$ and $\rho$:
\begin{equation}\label{appdensity2}
\frac{d\tilde{\rho}}{dr}=\bigg[\frac{3a^2(a^2-b^2)c_{s}^2+(3b^2+a^2)b^2}{4a^3b^5}\bigg]\frac{d\rho}{dr}.
\end{equation}
The term in the numerator with $24\pi\kappa(\rho+p)a^2c_{s}^2+(3b^2+a^2)b^2>0$ is required to guarantee the non-increasing monotonicity of $\tilde{\rho}$ once $\rho$ is such. For a positive $\kappa$, the positivity is true
if the null energy condition holds, $i.e.$ $\rho+p\geq0$.

For the effective density, taking the derivative of Eq.(\ref{effdensity1}) with respect to $r$ gives
\begin{equation}\label{effdensity2}
\begin{aligned}
\frac{d\rho_{\text{eff}}}{dr}=&ab\frac{d\tilde{\rho}}{dr}+\frac{(ab-1)}{4\pi r^3}+\kappa\bigg[\frac{1}{2r}\bigg(1-\frac{2m}{r}\bigg)\rho''\\
&+\bigg(\frac{2m}{r^3}-8\pi\tilde{\rho}\bigg)\rho'+\mathcal{O}(\kappa)\bigg]\bigg(\frac{a^2c_{s}^2-b^2}{ab}\bigg).
\end{aligned}
\end{equation}
Since the derivative  terms associated with $a$ and $b$ will pick out one order higher of $\kappa$, we collect them as $\mathcal{O}(\kappa)$ terms. Clearly, the first term in Eq.(\ref{effdensity2}) is negative if $\rho$ is monotonically non-increasing according to Eq.(\ref{appdensity2}). However, the assumption of the monotonically non-increasing $\rho$ cannot lead to the same monotonicity of $\rho_{\text{eff}}$ due to the potentially positive terms in Eq.(\ref{effdensity2}). 
To elaborate,
we define the $\kappa$ \emph{energy condition} $ab\geq 1$ ($\rho-p\geq 8\pi\kappa\rho p$) and consider if this condition is violated or not.
For $ab\geq 1$ ($\rho-p\geq 8\pi\kappa\rho p$), $a^2c_{s}^2\leq b^2$, while in the GR case 
with $\kappa\rightarrow 0$, 
$\rho-p\geq 0$ ($c_{s}^2\leq 1$), which is just the requirement for the  causal energy condition with positive density and pressure.
On the other hand,  $ab<1$ ($\rho-p< 8\pi\kappa\rho p$) gives  $a^2c_{s}^2>b^2$, which violates the causality condition if $\kappa\rightarrow 0$, but is allowed for any finite value of $\kappa$. In these two cases, the net effect can be offset only if the terms in the square bracket of Eq.(\ref{effdensity2}) are positive, \emph{i.e.}
\begin{equation}
\frac{1}{2r}\bigg(1-\frac{2m}{r}\bigg)\rho''+\bigg(\frac{2m}{r^3}-8\pi\tilde{\rho}\bigg)\rho'>0.
\end{equation}

For $\mathcal{O}(\kappa)$, we are obliged to impose this condition as an additional assumption, if we want to interpret this effective density as the physical one corresponding to GR, \emph{i.e.} the density contributing to the \emph{physical} mass $M(r)$.
Since the profile of $\rho$ depends on EoS contained inside the star as well as the modified TOV equation in Eq.(\ref{mtov}), the assumption of 
the monotonically non-increasing $\rho_{\text{eff}}$ limits the possible classes of EoS in the EiBI theory.
Note that the three densities tend to be the same in the GR limit of $\kappa\rightarrow 0$.

\section{\label{sec:Buch}The Buchdahl stability bound}
We now show the corresponding Buchdahl stability bound in the EiBI theory based on the assumptions made above.
To begin with, we can prove that if $\rho_{\text{eff}}$ is a monotonically non-increasing function, then
\begin{equation}\label{ineqM}
M(r)\geq\frac{r^3}{r_{s}^3}\mathcal{M},
\end{equation}
where $r_{s}$ is the radius of the star at which $p(r_{s})=0$ and $\mathcal{M}\equiv M(r_{s})$.
\begin{proof}
By definition, $M(r)=\int^{r}_{0}4\pi\xi^2\rho_{\text{eff}}(\xi)d\xi$. Using the mean-value theorem, there exists $\bar{r}\in(0,r)$, such that
\begin{equation*}
\rho_{\text{eff}}(\bar{r})=\frac{\int^{r}_{0}4\pi\xi^2\rho_{\text{eff}}(\xi)d\xi}{\int^{r}_{0}4\pi\xi^2d\xi}\equiv \bar{\rho}_{\text{eff}}(r),
\end{equation*}
resulting in 
\begin{align}
M(r)&=\frac{4\pi r^3}{3}\bar{\rho}_{\text{eff}}(r)
\geq\frac{r^3}{r_{s}^3}\bigg(\frac{4\pi r_{s}^3}{3}\bar{\rho}_{\text{eff}}(r_{s})\bigg)=\frac{r^3}{r_{s}^3}\mathcal{M},\nonumber
\end{align}
where 
$\bar{\rho}_{\text{eff}}(r)\geq\bar{\rho}_{\text{eff}}(r_{s})$ due to the non-increasing monotonicity, and $M(r_{s})=\mathcal{M}$.
\end{proof}
Together with Eq.(\ref{mass relation}), we readily obtain
\begin{equation}\label{Buch2}
m(r)+\frac{1}{2}(1-ab)[r-2m(r)]\geq\frac{r^3}{r_{s}^3}\mathcal{M}.
\end{equation}
On the other hand,
if $\tilde{\rho}$ is a monotonically non-increasing function, then
\begin{equation}\label{Buch3}
M(r)+\frac{1}{2ab}(ab-1)[r-2M(r)]\geq\frac{r^3}{r_{s}^3}\mathcal{M},
\end{equation}
where we have used the fact that $m(r_{s})=M(r_{s})\equiv\mathcal{M}$ at the surface ($a=b=1$) of the star.

From the inequalities in Eqs.~(\ref{Buch2}) and (\ref{Buch3}), we get the dual relations between the two mass functions. 
In order not to form a black hole, we must have $r-2m>0$ as well as $r-2M>0$ throughout the interior of the star. The signs of the extra terms in the inequalities,
which are absent in GR due to $a=b=1$, depend on the sign of $(ab-1)$.
If the $\kappa$ energy condition holds ($ab\geq1$), the inequality in Eq.~(\ref{Buch2}) is stronger than (or equal to) that in Eq.~(\ref{Buch3}), and the other way around if it is violated ($ab<1$).
When proving the Buchdahl stability bound, we will use the stricter assumption depending on the $\kappa$ energy condition.

Furthermore, if $\tilde{\rho}$ is a monotonically non-increasing function, then
\begin{equation}\label{Buch4}
\bigg(\frac{m}{r^3}\bigg)'\leq 0.
\end{equation}
\begin{proof}
\[
\bigg(\frac{m}{r^3}\bigg)'=\frac{m'}{r^3}-\frac{3m}{r^4}=\frac{4\pi}{r}\big[\tilde{\rho}(r)-\tilde{\rho}(\bar{r})\big]\leq 0,
\]
where $\bar{r}\in(0,r)$. Here, we have used the mean-value theorem  and  non-increasing monotonicity of $\tilde{\rho}$ to obtain the inequality.
\end{proof}
\begin{theorem*}
If (i) both $\tilde{\rho}$ and $\rho_{\text{eff}}$ are finite and monotonically non-increasing functions,
(ii) $A^2$ and $B^2$ are positive definite, and
(iii) the $\kappa$ energy condition ($ab\geq1$) holds,
the Buchdahl stability bound in EiBI gravity is given by
\begin{equation}
r_{s}\bigg(1-\frac{1}{2}g-\frac{1}{2}g^2\bigg)\geq\frac{9}{4}\mathcal{M},
\end{equation}
where 
\begin{equation}
\label{eqg}
g\equiv\frac{\mathcal{M}}{r_{s}^3}\int^{r_{s}}_{0}\frac{\sqrt{ab}-1}{\sqrt{1-\frac{2r^2}{r_{s}^3}\mathcal{M}}}rdr.
\end{equation}
\end{theorem*}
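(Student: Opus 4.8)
The plan is to run the classical Buchdahl argument on the \emph{auxiliary} metric $q_{\mu\nu}$, which obeys GR-like field equations, and then convert the resulting inequality into a statement about the \emph{physical} Schwarzschild mass $\mathcal{M}$ using the nonlinear $(a,b)$ dictionary between the two sectors.

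First I would use Eq.~(\ref{Buch1}). Assumption (i) applied to $\tilde{\rho}$ gives $(m/r^{3})'\le 0$ [Eq.~(\ref{Buch4})], so the right-hand side of Eq.~(\ref{Buch1}) is non-positive and $\frac{A'}{r}\sqrt{1-2m/r}$ is monotonically non-increasing in $r$. Integrating this monotonicity from an interior $r$ out to the surface --- where $m(r_{s})=\mathcal{M}$, $A^{2}(r_{s})=1-2\mathcal{M}/r_{s}$, and $\tilde{p}(r_{s})=0$ since $a=b=1$ there, so that Eq.~(\ref{app.rr}) gives $A'(r_{s})=\mathcal{M}/(r_{s}^{2}\sqrt{1-2\mathcal{M}/r_{s}})$ --- yields, for all $r\in(0,r_{s})$,
\begin{equation*}
A'(r)\ \ge\ \frac{\mathcal{M}}{r_{s}^{3}}\,\frac{r}{\sqrt{1-2m(r)/r}}\,.
\end{equation*}

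Next I would trade the auxiliary radial function for the physical one. Equation~(\ref{mass relation}) gives $r-2M=ab\,(r-2m)$, i.e.\ $1-2m/r=(1-2M/r)/ab$, so $1/\sqrt{1-2m/r}=\sqrt{ab}/\sqrt{1-2M/r}$; in particular $r-2M>0$ whenever $ab\ge 1$, so the physical metric has no interior horizon either. I would then bound $1-2M/r\le 1-2r^{2}\mathcal{M}/r_{s}^{3}$ using Eq.~(\ref{ineqM}) (legitimate because $\rho_{\text{eff}}$ is non-increasing, assumption (i)) and split $\sqrt{ab}=1+(\sqrt{ab}-1)$ with $\sqrt{ab}-1\ge 0$ by the $\kappa$ energy condition (iii). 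This bounds $A'(r)$ below by $\tfrac{\mathcal{M}}{r_{s}^{3}}$ times $r/\sqrt{1-2r^{2}\mathcal{M}/r_{s}^{3}}+(\sqrt{ab}-1)\,r/\sqrt{1-2r^{2}\mathcal{M}/r_{s}^{3}}$. Integrating from $0$ to $r_{s}$: the first term is the elementary integral $\tfrac{r_{s}^{3}}{2\mathcal{M}}\big(1-\sqrt{1-2\mathcal{M}/r_{s}}\big)$, and the second is $\tfrac{r_{s}^{3}}{\mathcal{M}}\,g$ by the definition~(\ref{eqg}). With $A(r_{s})=\sqrt{1-2\mathcal{M}/r_{s}}$ this rearranges to $A(0)\le \tfrac{3}{2}\sqrt{1-2\mathcal{M}/r_{s}}-\tfrac12-g$. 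Finally, assumption (ii) ($A^{2}>0$, hence $A(0)>0$ by continuity from $A(r_{s})>0$) forces $\tfrac32\sqrt{1-2\mathcal{M}/r_{s}}>\tfrac12+g$; squaring both (positive) sides and rearranging is exactly $r_{s}\big(1-\tfrac12 g-\tfrac12 g^{2}\big)\ge\tfrac94\mathcal{M}$.

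The hard part is not any single calculation but routing every inequality in a consistent direction through the nonlinear change of variables: one must use monotonicity of $\rho_{\text{eff}}$, monotonicity of $\tilde{\rho}$, and $ab\ge 1$ simultaneously, and check that they are mutually compatible (notably $m\ge M$, so that neither metric forms an interior horizon, and that $g\ge 0$). Equally essential is genuine regularity at the centre --- $A(0)$ finite and $ab\to 1$ as $r\to 0$, precisely the condition flagged below Eq.~(\ref{effdensity1}) --- which makes $g$ finite and legitimizes the final step; this is the source of the \emph{Hagedorn-like} central equation of state. (For $ab<1$ the same scheme runs with Eq.~(\ref{Buch3}) replacing Eq.~(\ref{Buch2}) and the roles of $m$ and $M$ interchanged, now with $g\le 0$.)
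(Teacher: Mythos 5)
Your proposal is correct and follows essentially the same route as the paper: integrating Eq.~(\ref{Buch1}) with the monotonicity result~(\ref{Buch4}), applying the surface matching conditions, converting $1-2m/r$ to $(1-2M/r)/ab$ via Eq.~(\ref{mass relation}), bounding $M(r)\geq (r^{3}/r_{s}^{3})\mathcal{M}$ from the monotonicity of $\rho_{\text{eff}}$, splitting $\sqrt{ab}=1+(\sqrt{ab}-1)$ to isolate $g$, and finally invoking $A(0)>0$ before squaring. The only additions beyond the paper's argument are your (correct) side remarks that $ab\geq1$ implies $m\geq M$ and that central regularity ($ab\to1$) keeps $g$ finite.
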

\begin{proof}
With  Eq.~(\ref{Buch4}) and $A(B)>0$, integrating Eq.(\ref{Buch1}) from $r$ to $r_{s}$, one gets
\[
\frac{A'(r_{s})}{r_{s}}\sqrt{1-\frac{2m(r_{s})}{r_{s}}}-\frac{A'(r)}{r}\sqrt{1-\frac{2m}{r}}\leq 0.
\]
To match the second fundamental form to the  Schwarzchild vacuum, it is required that
\[
A(r_{s})=\sqrt{1-\frac{2\mathcal{M}}{r_{s}}}\quad\text{and}\quad A'(r_{s})=\frac{\mathcal{M}}{r_{s}^2}\frac{1}{\sqrt{1-\frac{2\mathcal{M}}{r_{s}}}},
\]
leading to
\[
A'(r)\geq\frac{r}{\sqrt{1-\frac{2m(r)}{r}}}\frac{\mathcal{M}}{r_{s}^3}\,,
\]
and
\[
A(r_{s})-A(0)\geq\frac{\mathcal{M}}{r_{s}^3}\int^{r_{s}}_{0}\frac{rdr}{\sqrt{1-\frac{2m(r)}{r}}}\equiv I.
\]
If the $\kappa$ energy condition ($ab\geq1$) holds, by writing  $m(r)$ in terms of $M(r)$ we have $r-2m=(r-2M)/ab$, which yields  $M(r)\geq\frac{r^3}{r_{s}^3}\mathcal{M}$ from
 Eq.~(\ref{Buch2}). As a result, we obtain
\begin{align}
I &=\frac{\mathcal{M}}{r_{s}^3}\int^{r_{s}}_{0}\frac{\sqrt{ab}rdr}{\sqrt{1-\frac{2M(r)}{r}}}\geq\frac{\mathcal{M}}{r_{s}^3}\int^{r_{s}}_{0}\frac{\sqrt{ab}rdr}{\sqrt{1-\frac{2r^2}{r_{s}^3}\mathcal{M}}}\nonumber\\
&=\frac{\mathcal{M}}{r_{s}^3}\int^{r_{s}}_{0}\frac{rdr}{\sqrt{1-\frac{2r^2}{r_{s}^3}\mathcal{M}}}+\frac{\mathcal{M}}{r_{s}^3}\int^{r_{s}}_{0}\frac{(\sqrt{ab}-1)rdr}{\sqrt{1-\frac{2r^2}{r_{s}^3}\mathcal{M}}}\nonumber\\
&=\frac{1}{2}\bigg[1-\sqrt{1-\frac{2\mathcal{M}}{r_{s}}}\bigg]+g.\nonumber
\end{align}
All of the above imply
\[
\sqrt{1-\frac{2\mathcal{M}}{r_{s}}}=A(r_{s})\geq A(r_{s})-A(0)\geq\frac{1}{2}\bigg[1-\sqrt{1-\frac{2\mathcal{M}}{r_{s}}}\bigg]+g,
\]
solving the inequality with the desired result.
\end{proof}
If the $\kappa$ energy condition ($ab\geq1$) holds throughout the interior of the sphere, $g$ is positive definite. This means that the lower bound of the stable radius in EiBI gravity is \emph{larger} than $(9/4)\mathcal{M}$. On the other hand, if the $\kappa$ energy condition ($ab<1$) is violated, we replace Eq.~(\ref{Buch3}) with $m(r)\geq\frac{r^3}{r_{s}^3}\mathcal{M}$ in the proof
to get the same bound as in GR. 

The intuitive explanation is the EoS \emph{switch-on} of the ``repulsive effect'' in EiBI gravity. It becomes clear by expanding the auxiliary quantities
in Eqs.(\ref{appdensity1}) and (\ref{apppressure1}) to the leading order of $\mathcal{O}(\kappa)$, given by
\begin{equation*}
\begin{aligned}
\tilde{\rho}&=\rho-\pi\kappa(5\rho^2-6\rho p-3p^2)+\mathcal{O}(\kappa^2),\\
\tilde{p}&=p+\pi\kappa(\rho^2+2\rho p+9p^2)+\mathcal{O}(\kappa^2).
\end{aligned}
\end{equation*}
We observe that the repulsive effect  with $\tilde{\rho}<\rho$ and $\tilde{p}>p$ in EiBI is significant only when $\rho>[(3+2\sqrt{3})/5]p\simeq 1.29p$. 
Note that we do not consider the case with
$\rho<[(3-2\sqrt{3})/5]p\simeq -0.09p$  if  the physical density and pressure are both positive inside the star.
 For $\tilde{\rho}>\rho$ and $\tilde{p}>p$, the effect is reduced due to the increased density  in $\tilde{\rho}$ compared to $\rho$. 
 When including all orders of $\kappa$, the two cases correspond to $ab>1$ and $ab<1$, respectively. 
Note that the criteria for the repulsive effect to become significant depend on the $\kappa$ energy condition
and the specific value of $\kappa$, 
but the mechanism is the same as from
the contributions of $\mathcal{O}(\kappa)$. 
In Fig. \ref{fig:I}, we plot the difference between the auxiliary and physical densities $8\pi \kappa (\tilde\rho-\rho) $ v.s. the corresponding  pressures
$8\pi \kappa (\tilde{p}-p)$,
 for $ab$ ranging from $0.85$ to $1.2$,
to show the regions where the repulsive effect is significant or not. The borderline near $ab\simeq 1$ corresponds to $\tilde{\rho}\simeq\rho$, where the repulsive effect switches on/off. Moreover, as we shall see, the critical value of $ab=1$  corresponds to an exotic EoS.
\begin{figure}
    \includegraphics[width=4in]{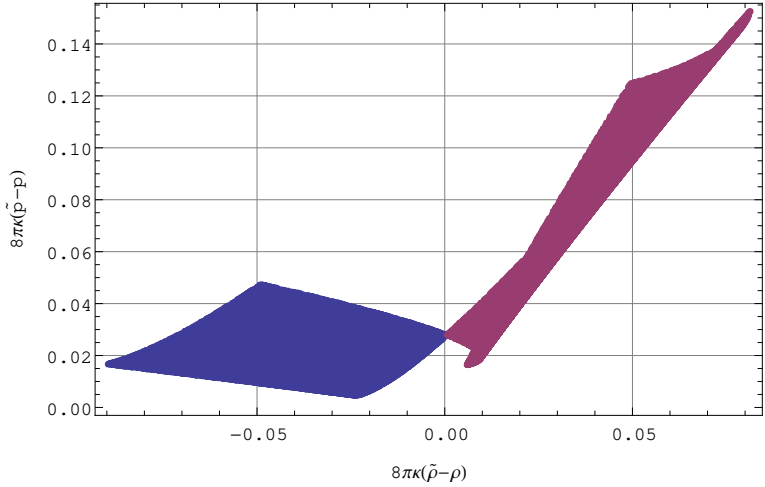}
  \caption{
  Difference between the auxiliary and physical densities $8\pi \kappa (\tilde\rho-\rho) $ v.s. the corresponding  pressures
$8\pi \kappa (\tilde{p}-p) $ in terms of  $ab$, where the blue (Left) and purple (Right) regions represent 
  $ab\gtrsim1$ with $1.1<a<1.2$ and $0.942<b<1.0$ and   $ab\lesssim1$ with $1.0<a<1.1$ and $0.85<b<0.942$, respectively.
 The borderline near $ab\simeq 1$ corresponds to $\tilde{\rho}\simeq\rho$. 
  Here, $a\equiv\sqrt{1+8\pi\kappa\rho}$ and $b\equiv\sqrt{1-8\pi\kappa p}$, while 
  $\tilde{\rho}$ and $\tilde{p}$ are given in terms of $a$ and $b$ from Eqs.~(8) and (9), respectively.
  } 
 \label{fig:I}
\end{figure}

The ``singularity avoidance'' feature of this model~\cite{Delsate:2012ky} relies on the fact that as $b\rightarrow 0$, \emph{i.e.} $8\pi\kappa p\rightarrow 1$, the \emph{auxiliary} energy density (pressure) diverges but with a finite \emph{physical} energy density (pressure). In this regard, $\kappa$ can be taken as a cutoff near the Planck scale. However, as  noted previously, if we assume $b\neq 0$ inside the star,
$\rho_{\text{eff}}$ is still potentially divergent due to $(1-ab)/8\pi r^2$ as $r\rightarrow 0$. The remedy to cure the pathology is to set $ab=1$, or equivalently 
\begin{equation}
p=\frac{\rho}{1+8\pi\kappa\rho},
\end{equation}
around $r=0$. Physically, this leads to an exotic EoS controlled by $\kappa$ near the center ($r\lesssim\sqrt{\kappa}$) of a star \emph{regardless} of the real matter content. For this exotic EoS, the physical pressure $p$ is bounded by $1/8\pi\kappa$.
On the other hand,  there is no bound on the physical density $\rho$ , as $\rho\rightarrow\infty$ if 
$p\rightarrow 1/8\pi\kappa$. In other words, the cutoff of the physical pressure does not prevent the divergence of the physical density,
as shown in Fig.~\ref{fig:II}. Remarkably, this situation is in close analogy to the Hagedorn temperature~\cite{Hagedorn:1965st}, in which the energy and entropy diverge but with a fixed and finite (Hagedorn) temperature. This Hagedorn-like EoS~\cite{Gibbons:2001ck,Dubovsky:2012wk} manifests somewhat a deep connection of EiBI with the string theory~\cite{Dubovsky:2012wk,Atick:1988si,Brigante:2007jv}, which deserves further investigation. However, the discussions above are only at the classical level.
The pressure near the cutoff scale may signal a breakdown of EiBI or  Hagedorn-like phase transition. Whether the divergence of $\rho$ really occurs during the gravitational collapse~\cite{Harko:2003hs,Tavakoli:2015llr,Malafarina:2016yuf} requires a real  understanding of EiBI as well as its quantized version.

\begin{figure}
    \includegraphics[width=4in]{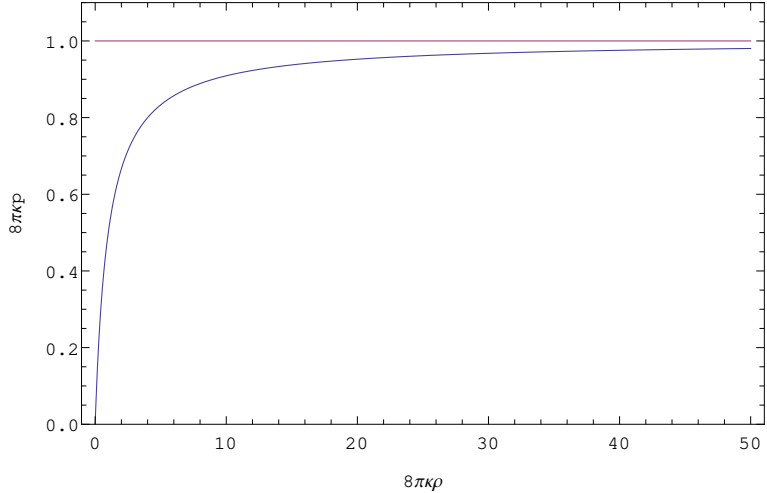}
  \caption{
 Exotic EoS behavior (blue curve) at the center of a star in the $\rho-p$ plane, which is analogous to the effect of the Hagedorn temperature,
where the  straight line (purple) represents the pressure bound $1/8\pi\kappa$
  } 
 \label{fig:II}
\end{figure}

\section{Conclusions}
The Buchdahl stability bound in EiBI is larger than (or equal to) $(9/4)\mathcal{M}$ in GR due to the \emph{repulsive effect} if 
the $\kappa$ energy condition ($ab\geq1$) holds throughout a star. To elaborate this statement, we see that $g$ in Eq.~(\ref{eqg})
can be calculated using the mean-value theorem such that
\begin{equation}
r_{s}\geq\bigg[2+\frac{\bar{a}\bar{b}/2}{1+\sqrt{\bar{a}\bar{b}}}\bigg]\mathcal{M},
\end{equation}
where $\bar{a}=\bar{a}(r_{s})\equiv a(\bar{r})$ and $\bar{b}=\bar{b}(r_{s})\equiv b(\bar{r})$, respectively, for $\bar{r}\in(0,r_{s})$. 
Expanding in terms of $\kappa$, we have $r_{s}/2\mathcal{M}\geq 9/8+(3\pi/8)\kappa(\bar{\rho}-\bar{p})+\mathcal{O}(\kappa^2)$. This inequality provides a direct way to constrain $\kappa$ just by examining the radii and masses of the most compact spherical objects in the sky. 
For instance, the typical density $\bar{\rho}\sim 10^{18}~\text{kg}/\text{m}^3$, radius $r_{s}\sim 12~\text{km}$ and gravitational radius $2\mathcal{M}\sim 6~\text{km}$ with $\mathcal{M}\sim 2M_{\odot}\sim 3~\text{km}$ of a neutron star (NS)~\cite{Lattimer:2004pg,Abbott:2018exr} yield the bound $\kappa\lesssim 10^{9}\text{m}^2$. Even though this gives a constraint on $\kappa$ of similar order of magnitude as those in Refs.~\cite{Avelino:2012ge,Pani:2011mg}, the bound on $\kappa$ can be further improved by \emph{one order} if a NS with the same radius but a mass around $3M_{\odot}$ is observed in the future.

On the other hand, if we believe that EiBI gravity is a viable theory down to the cutoff scale $\kappa$ , with the observed minimally stable radius of a spherical compact object of $(9/4)\mathcal{M}$, this would indicate that  an \emph{unusual} EoS, violating the $\kappa$ energy condition ($ab<1$), is contained inside the compact object. 
This sheds a new light on how we can probe EoS contained inside a compact star just by examining its smallest stable radius. However, 
the Buchdahl stability bound is modified, as that in GR, if an \emph{anisotropic} fluid is considered~\cite{Mak:2001eb,Ivanov:2017kyr,Raposo:2018rjn}. Such a situation merits further studies in the  future.

\section*{Appendix}

In Table~\ref{table:I}, we show the nomenclature and notation used in EiBI gravity.

\begin{table}
  \caption{Nomenclature and notation of EiBI gravity}
 \begin{ruledtabular}
 \begin{tabular}{l l}
  symbol  &  description   \\ [0.5ex] 
  \hline
$\kappa$ & EiBI model parameter (of length-squared dimension) \\
$g_{\mu\nu}$ & physical metric tensor \\ 
$q_{\mu\nu}$ & auxiliary metric tensor \\
$\mathcal{R}_{\mu\nu}$ & Ricci tensor associated with $q_{\mu\nu}$ \\
$\mathcal{G}_{\mu\nu}$ & Einstein tensor associated with $q_{\mu\nu}$ \\
$T_{\mu\nu}$ & physical energy momentum tensor \\
$\mathcal{T}_{\mu\nu}$ & auxiliary energy momentum tensor \\
$\rho$ & physical energy density \\
$p$ & physical pressure \\
$a=\sqrt{1+8\pi\kappa\rho}$ & transformation factor between the two metrics depending on $\rho$\\
$b=\sqrt{1-8\pi\kappa p}$ & transformation factor between the two metrics depending on $p$ \\
$m$ & auxiliary mass function associated with $q_{\mu\nu}$ \\
$\tilde{\rho}$ & auxiliary energy density associated with $m$ \\
$\tilde{p}$ & auxiliary pressure \\
$M$ & physical mass function associated with $g_{\mu\nu}$ \\
$\rho_{\text{eff}}$ & effective energy density associated with $M$ \\
$r_s$ & radius of the self-gravitating sphere (star) \\
$\mathcal{M}=m(r_s)=M(r_s)$ & total mass of the self-gravitating sphere (star)\\ [1ex]
 \end{tabular}
 \end{ruledtabular}
 \label{table:I}
\end{table}

\section*{Acknowledgements}
We thank Professor W.-T.~Ni for useful suggestions during a visit of W.-X.~Feng at National Center for Theoretical Sciences (NCTS), Taiwan. W.-X.~Feng was grateful to J.~C.~Baez and M.~Mulligan for helpful discussions as well as V.~Hubeny and C.-S.~Chu on the relevant issues during the Formosa Summer School on High Energy Physics 2018.
This work was supported in part by National Center for Theoretical Sciences and MoST (MoST-104-2112-M-007-003-MY3 and MoST-107-2119-M-007-013-MY3) and Academia Sinica Career Development Award Program (AS-CDA-105-M06).


\end{document}